\def\aref#1{({\ref{#1}})}
\DeclareMathOperator*{\argmax}{argmax}
\DeclareMathOperator*{\minimize}{minimize}
\DeclareMathOperator*{\maximize}{maximize}
\DeclareMathOperator*{\subjectto}{subject~to}
\def\diag.{\mathop{\mathrm{diag.}}\nolimits}
\theoremstyle{plain} 
\newtheorem{theorem}{Theorem}
\newtheorem{lemma}{Lemma}
\newtheorem{problem}{Problem}
\address{%
$^{1}$ College of Information Science and Engineering, Ritsumeikan University, 1-1-1 Noji Higashi, Kusatsu, Shiga 525-8577, Japan\\
$^{2}$ Graduate School of Information Science and Engineering, Ritsumeikan University, 1-1-1 Noji Higashi, Kusatsu, Shiga 525-8577, Japan\\
$^{3}$ Research Organization of Science and Technology, Ritsumeikan University, 1-1-1 Noji Higashi, Kusatsu, Shiga 525-8577, Japan\\
$^{2}$ Division of Advanced Information Technology \& Computer Science, Tokyo University of Agriculture and Technology,
2-24-16 Nakamachi, Koganei, Tokyo, 184-8588, Japan}
\abstract{A linear function submission-based double-auction (LFS-DA) mechanism for a regional electricity network is proposed in this paper. Each agent in the network is equipped with a battery and a generator. Each agent simultaneously becomes a producer and consumer of electricity, i.e., a {\it prosumer} and trades electricity in the regional market at a variable price. In the LFS-DA, each agent uses linear demand and supply functions when they submit bids and asks to an auctioneer in the regional market.
The LFS-DA can achieve an exact balance between electricity demand and supply for each time slot throughout the learning phase and was shown capable of solving the primal problem of maximizing the social welfare of the network without any central price setter, e.g., a utility or a large electricity company, in contrast with conventional real-time pricing (RTP). This paper presents a clarification of the relationship between the RTP algorithm derived on the basis of a dual decomposition framework and LFS-DA. Specifically, we proved that the changes in the price profile of the LFS-DA mechanism are equal to those achieved by the RTP mechanism derived from the dual decomposition framework except for a constant factor.
}
\begin{document}
\nolinenumbers
{\bf NOMENCLATURE}

\vspace{4mm}
\begin{tabular}{lp{8.5cm}}\hline
        $l^{t+}_{i} \in [ l^{t+,\min}_{i}, \infty )$    &  Electric energy consumption profile\\
        $l^{t-}_{i} \in [ 0, l^{t-,\max}_{i} )$          & Electric energy generation profile\\
        $b^{t+}_{i} \in [ 0, b^{t+,\max}_{i} ]$          & Battery charge profile\\
        $b^{t-}_{i} \in [ 0, b^{t-,\max}_{i} ]$          & Battery discharge profile\\
        $m^{t+}_{i} \in [ 0, m^{t+,\max}_{i} ]$          & \shortstack{Profile of electric energy sold to the local electricity market} \\
        $m^{t-}_{i} \in [ 0, m^{t-,\max}_{i} ]$          & \shortstack{Profile of electric energy bought from the local electricity market} \\
        $g^{t+}_{i} \in [ 0, \infty )$                          &\shortstack{Profile of electric energy sold to the outside grid} \\
        $g^{t-}_{i} \in [ 0, g^{-,\max}_{i} ]$           &\shortstack{Profile of electric energy bought from the outside grid} \\
        $x^t_i$ & \shortstack{Profile of state vector}\\
$ s^t_i \in  [0 , s^{\max}_{i}]$ & \shortstack{Profile of the state of charge (SOC) of the battery} \\
$\eta_i \in [0,1]$ & \shortstack{Storage efficiency}  \\
 $\gamma \in [0,1]$ & \shortstack{Electricity transmission efficiency} \\
 $C^t_i$ & \shortstack{Cost function for generating electric energy}\\
 $D^t_i$ & \shortstack{Utility function for consuming electric energy}\\
 $\phi^t_i$ & \shortstack{Individual utility function}\\
 $W^t_i$ & \shortstack{Individual welfare function}\\
 $p_t$ & \shortstack{Price profile}\\
 $p^{G+}_t$ & \shortstack{Price of electricity sold to the outside grid}\\
 $p^{G-}_t$ & \shortstack{Price of electricity bought from the outside grid}\\
 $\alpha^t_i$ & \shortstack{Constant term of parameters of the bidding function}\\
 $\beta^t_i$ & \shortstack{Primary coefficient term of parameters of the bidding function}\\\hline
\end{tabular}

The superscript $t \in {\cal T}$ denotes the $t$-th time slot in a day. The subscript $i \in {\cal N}$
 denotes the $i$-th agent in an electricity network.
\section{INTRODUCTION}
\label{sec1}
In recent years, solar power and other renewable forms of energy, including wind and hydroelectric power, have been attracting attention because fossil fuel will become difficult to mine in the near future. In contrast with thermal power generation based on fossil fuel,
 electric power generated by photovoltaic (PV) cells and wind turbines is uncontrollable and unpredictable because it depends on the weather and climate. Therefore, adapting consumer demand to the fluctuating supply is a key issue when developing our future electricity network. Recently, a decentralized autonomous electricity network has been studied as a way to overcome this problem~\cite{vogt2010market}.

In this paper, we propose a double auction-based demand-side management mechanism for a regional prosumers' electricity network named the inter-intelligent renewable energy network (i-Rene)~\cite{Taniguchi2012}. A schematic view of i-Rene is shown in Figure~\ref{fig:i-rene}. In i-Rene, locally generated electric power is consumed in the local electricity network. Currently, surplus energy produced by PVs connected to the electricity network in Japan is sold at a fixed high price, i.e., the price is set as the Feed-in Tariff, and transferred to a grid managed by a regionally monopolizing electric power company, almost forcibly. This causes reverse power flow, which could destabilize the global electric power network.

Therefore, i-Rene is designed not to transfer surplus energy to the external grid, to which it is connected via a gateway, and which essentially transmits electric power in a single direction. Each house is equipped with a smart meter running intelligent software. The software automatically trades electricity in the regional electricity market and allocates the generated electricity appropriately by referring to the load profile of each house.
The regional electricity market is automatically managed by the gateway.
 We assume that the regional electricity network covers an entire town in which the number of houses is assumed to be between ten and several thousand.

\begin{figure}[t]
\begin{center}
\includegraphics[width=0.7\textwidth,keepaspectratio=true]{./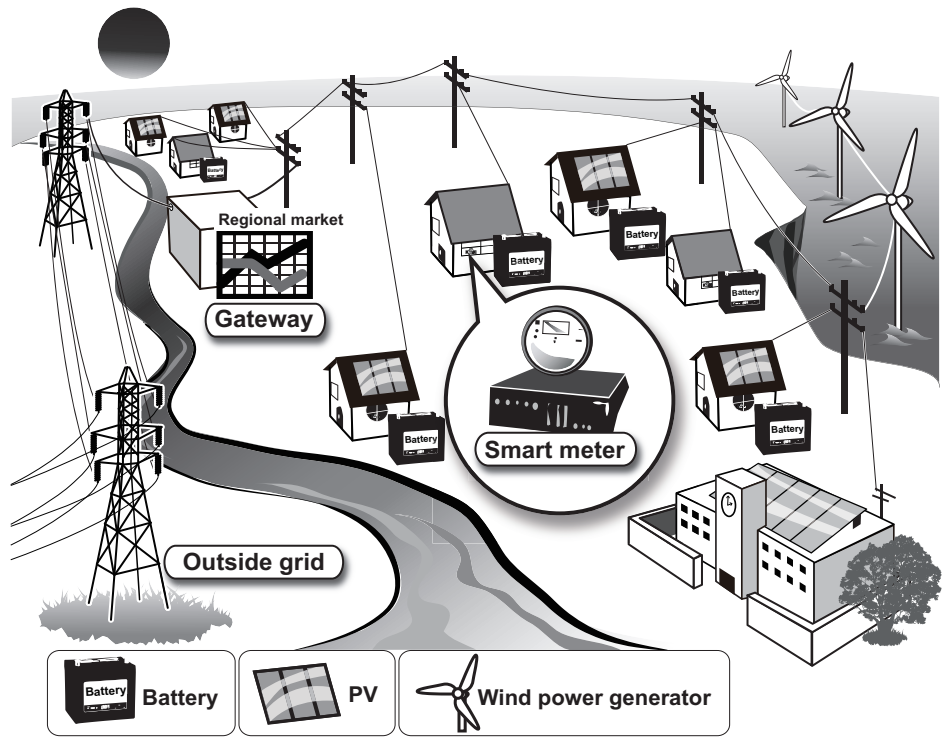}
\end{center}
\caption{Overview of inter-intelligent renewable energy network (i-Rene)}
\label{fig:i-rene}
\end{figure}

We assume that each house connected to i-Rene contains a generator using renewable energy resources and a battery. The intelligent software in the smart meter automatically trades electricity on behalf of the dweller.  However, local power generation, trading, and electricity stored in the batteries cannot always satisfy the total demand in the local grid. In such cases, consumers connected to i-Rene would have the option of buying electricity from an outside power grid through the gateway at a high price.
i-Rene is not centrally managed in that there is no central utility or large electricity company; instead, the agents in the network simply trade their electricity through double auction. Therefore, designing a double-auction mechanism and analyzing its properties are important.

Double-auction mechanisms have been used previously for designing multi-agent software systems to manage regional electricity networks~\cite{Kok2005,Hommelberg2007,Vytelingum10b,Saad2012,Taniguchi2012}. Double auction involves the submission of bids by potential buyers and the simultaneous submission of asks by potential sellers to an electricity market.
If an adequate price balancing demand and supply is determined, the auctioneer clears the market. The determined price allows the participants to buy or sell electricity on the basis of their submitted bids or asks.
One example of a double auction-based multi-agent software system for regional electricity network is the PowerMatcher~\cite{Kok2005,Hommelberg2007,kok2012}. Several field tests proved the ability of the PowerMatcher to allocate energy resources efficiently in a decentralized manner. 
In the PowerMatcher, each agent submits continuous demand and supply functions directly to an auctioneer in a market. Several multi-agent systems for realizing intelligent electricity networks employ such a function submission-based double auction (FS-DA)~\cite{Kok2005,Hommelberg2007,kok2012,Taniguchi2012,Li2009}.
The advantage of FS-DA is that the auctioneer determines an appropriate clearing price by calculating the intersection between aggregate demand and supply functions without iterative communication between the auctioneer and the agents. This calculation is easily performed using simple agent software without any manual intervention by humans.

On the other hand, a number of studies have been devoted to the use of real-time pricing (RTP) in smart grids~\cite{Saad2012,Alizadeh2012,Samadi2010,Li2011,Gatsis2011,Giannakis2011,Miyano2012,Disfani2014,GridWise1,Rad2010}. These studies mainly used game  theory and/or control theory to develop RTP algorithms and to prove their optimality and/or robustness. In particular, the dual decomposition framework provides a sophisticated explanation and theoretical foundation for RTP algorithms~\cite{Palomar2006}. However, the disadvantage of most of the dual decomposition-based RTP algorithms is that an iterative communication between a central utility, which settles the electricity price, and the agents are required to balance demand and supply. In addition, most of the dual decomposition-based RTP algorithms assume that a central utility ``does not'' behave selfishly, but behaves so as to optimize social welfare, i.e., it attempts to solve the master problem dedicatedly.

The use of the double-auction mechanism leads to the automatic generation of a time-varying  price profile by simply balancing demand and supply. The consumers and the power producers are expected to change their load, generation, and storage profiles by reacting to the time-varying price profile so as to maximize their profit.
This means the mechanism has an intrinsic RTP mechanism.
However, the optimality and the performance of LFS-DA in a regional prosumers' electricity network with distributed power resources and batteries have not been proved.

In this paper, we describe a linear function submission-based double auction (LFS-DA) mechanism for a day-ahead power market, in which an adequate price profile is gradually formed to control the demand and supply without any central price controllers. Throughout the learning phase, the LFS-DA is able to achieve an exact balance between electricity demand and supply for each time slot.
We show the LFS-DA to be essentially equal to dual decomposition-based RTP from the viewpoint of changes in the price profile. 
Especially, our main contributions are as follows.
\begin{itemize}
\item We propose an LFS-DA mechanism for a regional prosumers' electricity network. The mechanism is able to achieve an exact balance between electricity demand and supply at each moment during iteration.
\item We prove that the changes in the price profile obtained by the LFS-DA become the same as those obtained by the RTP derived from the dual decomposition framework except for a constant factor.
\end{itemize}

Our main result, i.e., Theorem 1, is very simple but powerful. It shows that simple LFS-DA is sufficient for managing the day-ahead market in i-Rene without the need for a central utility to perform RTP to balance demand and supply, and attempts to maximize social welfare by setting an appropriate price profile.

The remainder of this paper is organized as follows. Section~\ref{sec2} provides the background to our research and related work. Section~\ref{sec3} describes the problem definitions and basic assumptions of the target electricity network used in our research, which is a regional electricity network consisting of many prosumers. It also describes an RTP algorithm derived from the dual decomposition framework. Section~\ref{sec5} describes our proposed LFS-DA mechanism and provides proof that the LFS-DA is intrinsically equivalent to dual decomposition-based RTP. Section~\ref{sec6} contains details of a simulation experiment and its results. Finally, Section~\ref{sec7} concludes this paper.

\begin{figure}[tbp]
 \begin{minipage}{0.5\hsize}
\centering
\includegraphics[width=\textwidth,keepaspectratio=true]{./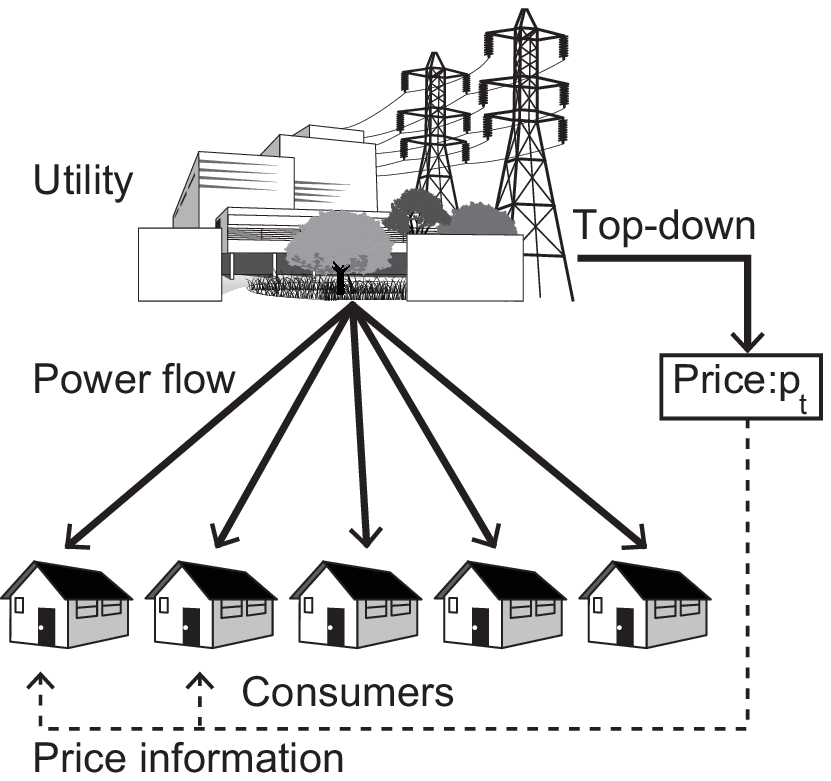}
 \end{minipage}
 \begin{minipage}{0.5\hsize}
 \centering
\includegraphics[width=\textwidth,keepaspectratio=true]{./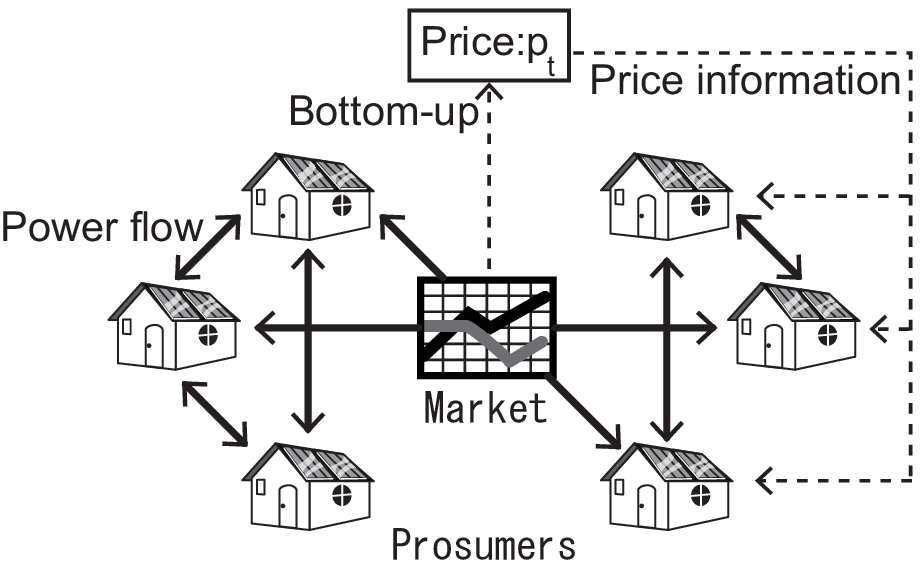}
 \end{minipage}
 \caption{(Left) real-time pricing in a conventional power grid, and (right) double-auction pricing mechanism in a regional prosumers' electricity network}\label{fig:rtp-da}
\end{figure}

%

\section{BACKGROUND AND RELATED WORK}\label{sec2}
\subsection{Penetration of distributed energy resources and smart grids}
Distributed energy resources (DER), e.g., distributed energy generators and storage facilities,  have gradually been introduced into households, plants, and offices. Such distributed generation (DG) is mostly performed by using renewable energy resources, including solar, wind, and hydroelectric power.
Conventionally, especially in Japan, large power generation companies have generated electricity centrally with thermal electric power stations and nuclear power plants, which they then transmit to consumers on a unilateral basis. The situation differs from country to country. Specifically, in Japan, each power supplier has traditionally been a regional monopolizer, obliged by law to completely fulfill consumers' electricity demands.

 However, the increasing penetration of DERs and the emergence of smart grid technologies are changing the requirements of power networks.
The wide-scale introduction of DERs causes a reverse power flow to conventional power grids, which would be rendered increasingly unstable and confused.
Therefore, it is important to develop a regional electricity network capable of mitigating the effects of a reverse power flow.
In this context, several power network concepts, including a smart grid~\cite{Brown2008}, micro grid~\cite{Dou2013}, and a virtual power plant (VPP) ~\cite{Molderink2010}, have been studied with the aim of smoothly introducing DERs to the grid.

In the development of a regional network based on DERs, demand side management (DSM) becomes an important design element.
However, in an electricity network, balancing the demand and supply at each moment is crucially important.
Therefore, achieving a balance between the demand and supply requires the power demand, rather than the power supply, to be controlled. 
A comprehensive survey of research relating to DSM was conducted by Alizadeh et al. \cite{Alizadeh2012}. Moreover, many field tests demonstrating the effectiveness of DSM have already been performed, e.g., \cite{GridWise1,GridWise2,kok2012}.

\subsection{Demand side management}
The realization of efficient, robust, and applicable DSM has led to the study of market-based methods~\cite{Alizadeh2012}, which have used price-based incentives, e.g., time-of-use (TOU) pricing, RTP, and peak-time rebates, to control consumers' electricity demand.
The approaches used in game and control theory have provided the theoretical foundations for DSM methodologies.

A comprehensive survey of the application of game theory to smart grid systems was provided by Saad et al.~\cite{Saad2012}. Vytelingum et al. analyzed the Nash equilibrium for an electricity grid to which distributed micro-storage devices are connected. Their multi-agent simulation showed that this had the effect of flattening the electricity price profile~\cite{Vytelingum10}.
Mohsenian-Rad et al. proposed a distributed algorithm for autonomous DSM~\cite{Rad2010}. They provide analytical proof to show that their proposed algorithm guarantees a significant reduction in the total energy cost of the system.

Control theory forms the theoretical foundation of RTP on the basis of a {\it dual decomposition}  framework~\cite{Samadi2010,Li2011,Gatsis2011,Giannakis2011,Miyano2012,Disfani2014}.
A comprehensive tutorial of dual decomposition was compiled by Palomar et al.~\cite{Palomar2006}.
In a dual decomposition-based approach, the problem of maximizing the social welfare of consumers and/or a power producer under several constraints is defined as the primal problem. This problem is transformed into a dual problem by applying a Lagrange relaxation and decomposition methods which have historically been used in optimization problems in electric power systems~\cite{Bard1988}.
 This dual problem can be divided into many sub-problems and a master problem. Each sub-problem corresponds to a selfish profit maximization problem of a consumer or producer, whereas the master problem corresponds to the problem involving a search for the optimal price profile. In a dual decomposition approach, the introduced Lagrange multipliers are interpreted as ``prices''. In this way, the dual decomposition method provides a mathematical foundation based on RTP and derives each pricing algorithm of DSM depending on each condition of the target electric power network. On the basis of the dual decomposition framework and related concepts, various RTP methods have been proposed~\cite{Samadi2010,Li2011,Gatsis2011,Giannakis2011,Miyano2012,Disfani2014,Chang2014,Yo2013,Kiani2012,Zhao2013a}.

\subsection{Multi-agent system and double auction}
Most of the previous studies relating to RTP control theory assume that {\it power producers} and {\it consumers} exist separately in an electric power network as shown in Figure~\ref{fig:rtp-da} (left). In contrast, in our work, we focus on an electricity network consisting of electricity {\it prosumers} as shown in Figure~\ref{fig:rtp-da} (right). The term prosumer refers to a person or entity who is both a producer and a consumer at the same time~\cite{wikinomics}. If a consumer installs a PV system at his/her house, he/she not only becomes a consumer, but also a power producer. 
The economical structure of an energy network in which regional prosumers participate would differ completely from that of a conventional unilateral power grid. Such a network has been studied in the context of multi-agent system-based microgrid.

In recent times, multi-agent systems using the double-auction method for managing electric power networks have been gaining attention~\cite{Zidan2012,Dou2013}.
Kok et al. developed the PowerMatcher Smart Grid Technology~\cite{kok2012}, a multi-agent-based distributed software system for the market integration of small and medium sized DER units.
In a PowerMatcher cluster, many agents interact with each other and the equilibrium price for electricity is determined automatically by using the double auction approach. The effectiveness of dynamic pricing using the PowerMatcher has been tested in simulations and field experiments~\cite{Kok2005,Hommelberg2007,kok2012}.
Vytelingum et al. proposed a market-based mechanism on the basis of Continuous Double Auction (CDA), and the use of an efficient trading agent strategy. They showed that the mechanism and the agents were able to cope with unforeseen demand or increased supply capacity in real time~\cite{Vytelingum10b}.  Taniguchi et al. proposed an inter-intelligent renewable energy network (i-Rene), which is automatically managed by using double auction~\cite{Taniguchi2012}.
Related papers reported the capability of their methods to achieve a reduction in peak consumption and/or the efficient utilization of energy~\cite{kok2012,Vytelingum10b,Taniguchi2012}.
However, they were unable to prove that the double auction-based approaches maximize the social welfare of the prosumers' network from an analytical point of view. This contrasts with the previously mentioned dual decomposition-based methods, all of which have an analytical foundation that guarantees an increase in the social welfare. 

Most of the multi-agent-based approaches employ double auction in the regional electric power network to determine the price and amount of electricity transacted. In double auction, bids (i.e., buy orders) and asks (i.e., sell orders) have to be matched to enable a certain amount of electricity to be transacted and the price to be determined. However, if various agents submit bids and asks at various prices for various units of electricity, the market auctioneer has to use a large number of iterative communications to match demand and supply.
To reduce the communication overhead, Kok et al. simplified the communication by using only demand functions and prices~\cite{kok2012}.
Agents in the network submit individual supply and/or demand functions to the market and if all of the agents submit continuous functions, the market is able to calculate an adequate clearing price and suitable transactions by searching for the point at which the aggregate demand and supply functions intersect. In this paper, we refer to this double-auction method as the function submission-based double-auction (FS-DA) mechanism. In the Agent-based Modeling of Electricity Systems (AMES) test bed model developed by Li et al. and in the i-Rene developed by Taniguchi et al., the submitted functions are also restricted to linear functions~\cite{Li2009,Taniguchi2012}. We refer to an FS-DA mechanism that only uses linear functions as an LFS-DA mechanism.

In contrast with the traditional top-down approach, in which price profiles are determined by a public utility or by RTP generation (see Figure~\ref{fig:rtp-da} left), our double-auction approach determines price profiles dynamically through interaction between sellers and buyers in a bottom-up manner (see Figure~\ref{fig:rtp-da} right).
%
Several related mechanisms were proposed. Sam et al. proposed a function bidding mechanism for multi-agent charging of electric vehicles~\cite{Sam2014}. Their mechanism could eliminate an iterative exchange of messages by introducing function bidding.
Kiani et al. proposed a dynamic market mechanism and analyzed the market as a dynamical system~\cite{Kiani2014,Hansen2014}. Xu et al. proposed and analyzed a simple market mechanism using supply function bidding~\cite{Xu2015}. Their supply function differed from that of our model.
The research most closely related to ours was conducted by Papavasiliou et al., who proposed a Newton algorithm-based double-auction mechanism~\cite{Papavasiliou2010}. Their mechanism, supply function bidding, is interpreted as a Newton algorithm for optimizing problems with a decomposable structure. The mechanism enables an automated demand response control system through a double auction-like mechanism.
However, they did not consider a prosumer network in which each consumer uses batteries and renewable energy resources, and did not demonstrate the mathematical relationship between dual decomposition-based RTP and the function submission-based double-auction mechanism under the condition that each agent behaves selfishly, i.e., they required each agent to use a Newton algorithm to update its variables. In this paper, we show that the price change in LFS-DA becomes the same as dual decomposition-based RTP except for a constant factor, which is our main theorem.

\section{PROBLEM DEFINITION}\label{sec3}

\begin{figure}[t]
\centering
\includegraphics[width=0.7\textwidth,keepaspectratio=true]{./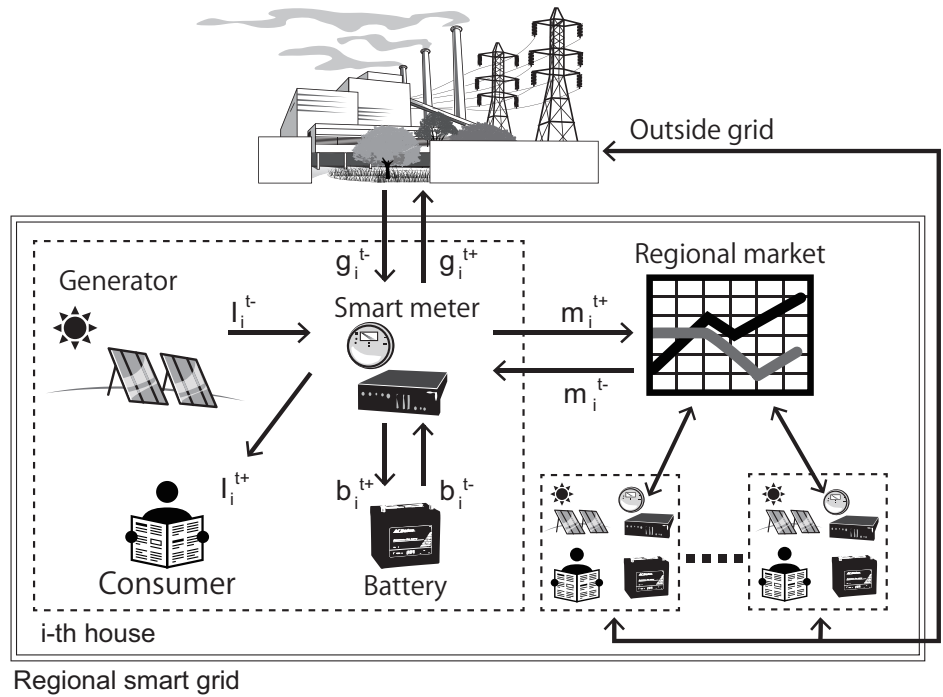}
\caption{Power flow in the inter-intelligent renewable energy network (i-Rene)}
\label{fig:irene}
\end{figure}

\subsection{Basic assumptions}
In this work, we define a target regional electricity network as follows (a schematic diagram is shown in Figure~\ref{fig:i-rene} and Figure~\ref{fig:irene}):
we do not distinguish between electricity suppliers and consumers; rather, we assume that all people on the grid are capable of producing and consuming electricity as {\it prosumers}. Thus, it is assumed that each house is equipped with a generator powered by renewable energy, e.g., PV cells or wind power generators, and a storage device, e.g., a battery.
Furthermore, each house is considered to have a smart meter running software, which automatically controls the electricity of the house by managing its battery, transmitting electricity to other houses, and communicating with other information systems.

In the regional electricity network, houses are connected through a regional electricity market through which each prosumer can sell and buy electricity.
The local electricity price fluctuates on the basis of regional demand and supply.
The prosumer attempts to optimize its trading rule, consumption, and generation so as to maximize his/her welfare.
Basically, i-Rene is intended to behave as an independent electricity network.

The outside grid provides the prosumer with an optional alternative from which to buy electricity at the fixed high price at any time. In addition, the prosumer can also sell surplus electricity to the outside grid at the fixed low price, which is set so as to inhibit reverse power flow.
We assume the outside grid is a unilateral conventional grid, as is common in Japan.
Hereafter, we refer to a household, including the prosumers living in the house and the software, simply as an {\it agent}.

Suppose there are $N$ agents in i-Rene and a set of these agents is represented by ${\cal N} := \{1,2,...,N\}$.
An agent can consume, generate, charge, discharge, buy, and sell electricity through its smart meter during every one of the $T$ time slots. In addition, the number of time slots for transactions is shared by all of the agents in the grid. The set of time slots is defined as ${\cal T} := \{1,2,...,T\}$.
The $i$-th agent can determine the amount it consumes $l^{t+}_{i}$, generates $l^{t-}_{i}$, charges $b^{t+}_{i}$, discharges $b^{t-}_{i}$, sells to $m^{t+}_{i}$, and buys from the regional market $m^{t-}_{i}$, and sells to $g^{t+}_{i}$, and buys from the outside grid $g^{t-}_{i}$ during each time slot $t$.
The variables of the $i$-th agent $(i\in{\cal N})$ at time $t\in{\cal T}$ are defined as shown under
Nomenclature. The relationship among the variables is also schematically shown in Figure~\ref{fig:irene}.

Each variable has its individual lower limit and/or upper limit owing to the limited capability of each apparatus or system. The superscripts $\cdot ^+$ and $\cdot ^-$ represent the direction from the viewpoint of a smart meter, which is at the center of electric energy flows in an agent's house. Specifically, $\cdot ^+$ and $\cdot ^-$ represent outflow and inflow, respectively.
We assume that the energy flow through the smart meter adheres to the law of the conservation of energy for each time slot $t$ as follows.
    \begin{align}
      l^{t+}_{i} - l^{t-}_{i} + b^{t+}_{i} - b^{t-}_{i} + m^{t+}_{i} - m^{t-}_{i} + g^{t+}_{i} - g^{t-}_{i} = 0 
    \end{align}
The amount of electricity demanded has to be balanced against the supply in the regional electricity network for each time slot $t$. We assume
    \begin{align}
      \sum_{i\in {\cal N}} (\gamma m^{t+}_{i} - m^{t-}_{i}) = 0 \label{eq:balancing}~~~\forall t \in {\cal T}.
    \end{align}
where $\gamma \in [0,1]$ represents the electricity transmission efficiency. If $\gamma = 1$, there is no electricity energy loss during transmission. In addition, the storage efficiency $\eta_i \in [0,1]$ must be taken into consideration. If $\eta_i  = 1$, the charged electricity can be fully charged without any loss.
The storage profile $s^{t}_{i}$ represents the state of charge (SOC) of the $i$-th agent's storage device at time $t$ and is expressed by the following equation.
    \begin{align}
      s^{t}_{i}
      &:= s^{t-1}_{i}+ \eta_i b^{t+}_{i} - b^{t-}_{i} ~~~~~\forall i\in{\cal N}, \forall t\in{\cal T}\\
      &=  s^{\text{init}}_{i} + \sum_{k\in \{1,2, \ldots, t \}}(\eta_i b^{k+}_{i} - b^{k-}_{i}).
    \end{align}
where $s^{\text{init}}_{i}$ is the initial SOC of the $i$-th agent's battery.

Agents can buy electricity from and sell electricity to the outside grid at the price of $p^{G-}_{t}$ and $p^{G+}_{t}$ per unit, respectively. The situation for which reverse power flow is completely prohibited is modeled by setting $p^{G+}_{t}=0$. An immediate resale behavior between an agent and the outside grid is suppressed by assuming the following constraint.
    \begin{align}
      0 \le p^{G+}_{t} \le p^{G-}_{t}.
    \end{align}

\subsection{Social welfare}
We define the cost for generating electric energy $l^{t+}_{i}$ for the $i$-th agent by $C^t_i:\mathbb{R} \rightarrow \mathbb{R}$, where $C^t_i$ is a convex function of class $C^2$ and
define the utility for consuming electric energy $l^{t-}_{i}$ for the $i$-th agent by $D^t_i:\mathbb{R} \rightarrow \mathbb{R}$, where $D^t_i$ is a concave function of class $C^2$.
The welfare $W_i:\mathbb{R}^{8T}\times \mathbb{R}^{T} \rightarrow \mathbb{R}$ of the $i$-th agent is defined as follows.
\begin{align}
        W_i(x_i,p) &:=\sum_{t \in {\cal T}} W^{t}_{i}(x^{t}_{i},p_t), \label{utility_i}\\
        W^t_i(x^{t}_{i},p_t) &:= \phi^t_i(x^t_i) - p_t \gamma m^{t+}_{i} + p_t m^{t-}_{i}, \label{eq:welf1}\\
        \phi_i^t(x^t_i) &:= D^{t}_{i}(l^{t+}_{i}) - C^t_i( l^{t-}_{i} ) + p^{G+}_{t}g^{t+}_{i} - p^{G-}_{t}g^{t-}_{i}\\
        x_{i} &:= ( x^{t}_{i} )_{t \in {\cal T}}, \\
        x^{t}_{i} &:= ( l^{t+}_{i}, l^{t-}_{i}, b^{t+}_{i}, b^{t-}_{i}, m^{t+}_{i}, m^{t-}_{i}, g^{t+}_{i}, g^{t-}_{i} ),
      \end{align}
where $p = (p_t)_{t\in{\cal T}}:=(p_1, ..., p_T)$, $p_t$ is a buyer's price profile in the regional electricity market at time $t\in{\cal T}$, and $\phi^t_i(x_t)$ is a utility function for the $i$-th agent at time $t$ including payments to an outside grid. Because of the electricity losses in  transmission, the seller's price and buyer's price become different. To balance the amount of money paid by sellers and buyers, the seller's price becomes $\gamma p_t$.
The $i$-th agents behavior at time $t$ is represented by $x^t_i$, concisely. We call $x^t_i$ a state vector for the $i$-th agent at time $t$.
Next, we define the social welfare of the network by $W(x,p)$.
    \begin{align}
 W(x,p) :=&      \sum_{i\in{\cal N}} \sum_{t\in{\cal T}} \big( D^{t}_{i}(l^{t+}_{i}) - C^t_{i}(l^{t-}_{i}) + p^{G+}_{t}g^{t+}_{i} - p^{G-}_{t}g^{t-}_{i} \nonumber \\
 &~~~~~~~~~~ + p_t \gamma m^{t+}_{i} - p_t m^{t-}_{i} \big) \label{eq:sw1}\\
   =&  \sum_{t \in {\cal T}}\big(\sum_{i\in{\cal N}} \phi^t_i(x_i) +  p_t \underbrace{\sum_{i\in {\cal N}} (\gamma m^{t+}_{i} - m^{t-}_{i})}_{=0\ \text{from equation \aref{eq:balancing} }} \big)  \label{eq:sw2}
   \end{align}
where  $x = (x_i)_{i\in{\cal N}}:=(x_1, ..., x_N)$.
This shows that the social welfare inside of the regional prosumers' electricity network does not depend on the price profile of the internal regional market.
Therefore, maximizing $W(x,p)$ is the same as maximizing $ \phi(x) :=\sum_{i \in {\cal N}} \phi_i (x_i) $ where $\phi(x)$ is the sum of the individual utility functions $ \phi_i(x_i) :=\sum_{t \in {\cal T}} \phi^t_i (x^t_i) $.

\subsection{Real-Time Pricing Algorithm}
The following problem is defined as the {\it primal problem}, which has the purpose of maximizing the social welfare of the regional network.

\begin{problem}[Primal problem]
\label{prob:PP}
  \begin{align}
    \underset{x\in\mathbb{R}^{8NT}}{\maximize} &\text{ } \sum_{i \in {\cal N}} \phi_i (x_i), \label{primal1}\\
    \subjectto
    &\text{ } x_i \in {\cal X}_i ~~~\forall i \in {\cal N}, \label{primal2}\\
    &\text{ } \sum_{i\in{\cal N}} f^{t}_{i}(x^t_{i}) = 0 ~~~\forall t \in {\cal T}, \label{primal3}\\
    &~~f^{t}_{i}(x^t_i) := \gamma m^{t+}_{i} - m^{t-}_{i} .\label{primal4}
  \end{align}
\end{problem}
 where ${\cal X}_i$ is a feasible set for the $i$-th agent except for the network constraints (see APPENDIX~\ref{ap_sec1}).

The primal problem is numerically solved by a solver. However, no electric power companies, utilities, or governments can directly control the behavior of any of the agents.
In addition, they cannot obtain information about $C^t_i$ and $D^t_i$  because this information is usually private. Even if the central control system could obtain the private information, solving the primal problem requires a huge computational cost when the number of agents $N$ becomes large. In reality, solving the primal problem centrally is impossibly difficult in the current economic environment.
Therefore, a decentralized optimization method using price information is gathering attention in the context of smart grid management.

Based on the dual decomposition framework, an RTP mechanism maximizing social welfare, in which each agent maximizes its  welfare selfishly, can be derived, and the price profile is updated centrally using a sub-gradient.  The RTP algorithm is outlined in Algorithm~\ref{alg:dual}.

\begin{algorithm}[t]
\caption{Iterative solution of dual problem: real-time pricing (RTP) algorithm}
\label{alg:dual}
\begin{algorithmic}
\STATE $k \leftarrow  0$
\STATE Initialize the price profile $p^{(k)} = (p_t^{(k)})_{t \in {\cal T}}$
\REPEAT
\STATE // Each agent solves its sub-problem~\aref{dual_sub}, and obtains its solution.
\STATE Update $x^{(k+1)}_{i} \leftarrow  x^{*}_{i}(p^{(k)})$.
\STATE // A central utility solves the master problem~\aref{eq:dual_master} under the condition that $x_i = x_i^{(k+1)}$. 
\STATE Update $p^{(k+1)}_t \leftarrow p^{(k)}_t -\theta^t_k \xi_t(p^{(k)})$, for each $t$.
\STATE $k \leftarrow  k+1$
\UNTIL a predefined stopping criterion is satisfied.
\RETURN 
Transact $\{x_i^{(k)} \}_{i \in {\cal N}}$ with $p^{(k)}$ as price profile
\end{algorithmic}
\end{algorithm}\

The price profile $p^{(k)}_t$ is updated by using a known sub-gradient method~\cite{Palomar2006,Samadi2010}
Adopting this method, $p_t^{(k)}$ is updated as follows:
    \begin{align}
      p^{(k+1)}_t &= p^{(k)}_t - \theta^t_k \xi_t(p^{(k)}) \label{eq:dual_update}\\
      \xi_t(p^{(k)})
      &:=  \sum_{i\in{\cal N}} f^{t}_{i}( x^{*}_{i}(p^{(k)}) ) \\
      &= \sum_{i\in{\cal N}} \big(\gamma m^{*t+}_{i}(p^{(k)}) - m^{*t-}_{i}(p^{(k)}) \big) ,
    \end{align}
where $\theta^t_k > 0$ is the learning rate of the sub-gradient method.
Iterative updating of $x_i$ and $p$ provides the numerical solution of the dual problem (see~\ref{ap_sec1}). This solution is also a solution of the primal problem~\aref{primal1} - \aref{primal4} . Therefore, the social welfare of the network is expected to be maximized.
Further details are provided in APPENDIX~\ref{ap_sec1}.
From the viewpoint of the pricing mechanism, solving the dual problem equates to the determination of a price profile by a utility, so as to maximize the social welfare.
Therefore, we refer to this algorithm as an RTP algorithm.

Although this algorithm presents a feasible solution for the decentralized energy dispatch problem, it is unable
to achieve an exact balance between demand and supply before converging to an optimal solution, despite the necessity of such a balance in an electricity network. In each step, $\delta m^t :=  \Big(\sum_{i\in{\cal N}} \big(\gamma m^{*t+}_{i}(p^{(k)}) - m^{*t-}_{i}(p^{(k)})\big) \Big)_{t\in{\cal T}}$ is not exactly a zero vector in most cases.
Therefore, a heuristic process should be applied when aiming to compensate the difference $\delta m^t$. One of the simplest methods is that the utility compensates this difference when it is not zero by selling or buying $\delta m^t$ to or from the outside grid at the price of $p^{G+}_t$ or $p^{G-}_t$, respectively. In this work, we assume that the i-Rene gateway employs this heuristic procedure if i-Rene adopts the RTP algorithm.

In practical situations, the iteration cannot be performed many times, and only a small number of iterations are permitted. In such cases, the effect of $\delta m^t$ cannot be ignored.
In this context, a mechanism which would be able to achieve an exact balance between the demand and supply is desirable. The LFS-DA is such a mechanism as it has the ability to balance demand and supply exactly for each time slot and every iteration step, and is guaranteed to solve the problem in the same way as the dual decomposition-based RTP algorithm.

\section{THE LINEAR FUNCTION SUBMISSION-BASED DOUBLE AUCTION}\label{sec5}
The LFS-DA mechanism works as follows. A prosumer submits their individual linear demand and supply functions parametrized by two parameters, i.e., $\alpha^{t}_{i}$ and  $\beta^{t}_{i} $, described in \aref{submit_m1} and \aref{submit_m2}. Then, the market-clearing price $p_t$ is determined exactly by calculating the point at which the aggregate demand and supply functions intersect. After several iterations between an auctioneer and the prosumers, all prosumers transact their electricity using the determined clearing price.
In this section, we formulate the LFS-DA mechanism and disclose its theoretical properties.

\subsection{Transaction with LFS-DA}
In LFS-DA, each agent has a linear demand function $m^{t-}_i = \mu^{t-}_i(p_t)$ and a linear supply function $m^{t+}_i=\mu^{t+}_i(p_t)$ for each time slot $t$. These two functions are determined by using the two parameters $\alpha^{t}_{i}$ and  $\beta^{t}_{i} $.
    \begin{align}
      m^{t+}_i= \mu^{t+}_{i}(p_t) &:= \lfloor \beta^{t}_{i}p_t - \alpha^{t}_{i} \rfloor, \label{submit_m1} \\
     m^{t-}_i= \mu^{t-}_{i}(p_t) &:= \lfloor-\beta^{t}_{i}p_t + \alpha^{t}_{i} \rfloor, \label{submit_m2}
    \end{align}
where $\lfloor x \rfloor := \max(x,0)$. In the LFS-DA, each agent submits the two parameters, i.e., $\alpha^{t}_{i}$ and  $\beta^{t}_{i} $, to the  auctioneer in the regional market to provide information about the linear demand and supply functions.
In this work, we assume $\beta^{t}_{i} $ is a fixed positive constant, and $\alpha^{t}_{i}$ is a flexible variable that is optimized by each agent.

When each agent submits the two parameters $( \alpha^t_i, \beta^t_i)$
 to the market, an adequate price $p_t$ is determined that would suffice to clear the market by searching for the point at which the aggregate demand and supply functions intersect.
The constraint for balancing demand and supply \aref{eq:balancing} becomes
    \begin{align}
      \sum_{i\in{\cal N}} f^{t}_{i}(x_{i},p_{t}) &=
      \gamma \sum_{i\in{\cal N}} \lfloor \beta^{t}_{i}p_{t} - \alpha^{t}_{i}\rfloor - \sum_{i\in{\cal N}} \lfloor-\beta^{t}_{i}p_{t} + \alpha^{t}_{i} \rfloor \nonumber \\
      &=0 .\label{eq:submit_st_loss}
    \end{align}

The price $p_t$ divides the agents into two groups. If for the $i$-th agent, $\frac{ \alpha^{t}_{i} }{ \beta^{t}_{i} } \le p_t $, the agent becomes a seller, i.e., $\beta^{t}_{i}p_{t} - \alpha^{t}_{i}= m^{t+}_{i} > 0$ and $m^{t-}_{i} = 0$. In contrast, if for the $i$-th agent $\frac{ \alpha^{t}_{i} }{ \beta^{t}_{i} } > p_t$, the agent becomes a buyer, i.e., $m^{t+}_{i}=0$ and $ -\beta^{t}_{i}p_t + \alpha^{t}_{i} = m^{t-}_{i} > 0 $. We define the set of suppliers at time $t$ by $I_t^{+}(p_t) := \{ i \mid  \frac{ \alpha^{t}_{i} }{ \beta^{t}_{i} } \le p_t , i \in {\cal N} \}$, and the set of consumers  at time $t$ by $I_t^{-}(p_t) := \{ i \mid  \frac{ \alpha^{t}_{i} }{ \beta^{t}_{i} } > p_t , i \in {\cal N}  \}$.
In this case, on the basis of the constraint ~\aref{eq:submit_st_loss}, an appropriate clearing price  has to satisfy the following equation.
    \begin{align}
    \aref{eq:submit_st_loss}  \Leftrightarrow
      0&= \gamma \sum_{i\in I_t^{+}(p_{t})} (\beta^{t}_{i}p_{t} - \alpha^{t}_{i}) - \sum_{i\in I_t^{-}(p_{t})} (-\beta^{t}_{i}p_{t} + \alpha^{t}_{i})  \label{submit_st_loss2}
      \end{align}
By solving the above equation for $p_t$, we obtain
     \begin{align}
\aref{submit_st_loss2} \Leftrightarrow
   p_{t} &= \frac
      { \gamma \alpha^{t}_{I_t^{+}(p_{t})} + \alpha^{t}_{I_t^{-}(p_{t})} }
      { \gamma \beta^{t}_{I_t^{+}(p_{t})} + \beta^{t}_{I_t^{-}(p_{t})} }, \label{eq:pt_balancing}
    \end{align}
where $ \alpha^{t}_{I_t^{+}(p_t)} :=\sum_{i\in I_t^{+}(p_t)} \alpha^{t}_{i}$,
$ \alpha^{t}_{I_t^{-}(p_t)} :=\sum_{i\in I_t^{-}(p_t)} \alpha^{t}_{i}$,
$\beta^{t}_{I_t^{+}(p_t)} :=\sum_{i\in I_t^{+}(p_t)} \beta^{t}_{i}$,
 and $\beta^{t}_{I_t^{-}(p_t)} :=\sum_{i\in I_t^{-}(p_t)} \beta^{t}_{i}$.
The price $p_{t}$ satisfying \aref{eq:pt_balancing} exactly balances demand and supply, and fulfills the constraint in~\aref{eq:submit_st_loss}.

\begin{lemma}[The clearing price]\label{lemma0}
The clearing price $p_t$ that satisfies \aref{eq:pt_balancing} uniquely exists and be calculated exactly.
\end{lemma}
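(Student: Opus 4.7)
The plan is to reformulate the clearing condition as a single scalar equation $F(p_t) = 0$ and then establish (i) uniqueness by strict monotonicity and (ii) existence by a continuity/limit argument, before describing the constructive procedure that delivers $p_t$ in closed form.

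First I would define the excess supply function
\begin{equation*}
F(p) := \gamma \sum_{i\in{\cal N}} \max(\beta^{t}_{i}p - \alpha^{t}_{i},\,0)\; -\; \sum_{i\in{\cal N}} \max(-\beta^{t}_{i}p + \alpha^{t}_{i},\,0),
\end{equation*}
so that the balancing condition \aref{eq:submit_st_loss} reads exactly $F(p_{t}) = 0$. Because each term is a continuous piecewise linear function of $p$ with a single breakpoint at $\alpha^{t}_{i}/\beta^{t}_{i}$, $F$ is continuous and piecewise affine on $\mathbb{R}$, with at most $N$ breakpoints.

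Next I would establish strict monotonicity. On any open interval between two consecutive breakpoints the partition of ${\cal N}$ into $I^{+}_t(p)$ and $I^{-}_t(p)$ is constant, so $F$ is affine with slope
\begin{equation*}
F'(p) \;=\; \gamma \sum_{i \in I^{+}_t(p)} \beta^{t}_{i} \;+\; \sum_{i \in I^{-}_t(p)} \beta^{t}_{i}.
\end{equation*}
Because $\beta^{t}_{i}>0$ for every $i$ and $\gamma \in (0,1]$, this slope is strictly positive on every affine piece (the partition $I^{+}_t \cup I^{-}_t$ covers ${\cal N}$ for all $p$ that are not breakpoints), and continuity across the breakpoints then yields that $F$ is strictly increasing on $\mathbb{R}$. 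Evaluating the limits, for $p$ larger than $\max_i \alpha^{t}_{i}/\beta^{t}_{i}$ every agent is a seller and $F(p)=\gamma p \sum_i \beta^{t}_{i} - \gamma \sum_i \alpha^{t}_{i}\to +\infty$; for $p$ smaller than $\min_i \alpha^{t}_{i}/\beta^{t}_{i}$ every agent is a buyer and $F(p) = p \sum_i \beta^{t}_{i} - \sum_i \alpha^{t}_{i} \to -\infty$. The intermediate value theorem combined with strict monotonicity then gives the existence and uniqueness of $p_{t}$.

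For the exact computability claim I would observe that the zero of $F$ lies in one of at most $N+1$ intervals determined by sorting the breakpoints $\alpha^{t}_{1}/\beta^{t}_{1},\dots,\alpha^{t}_{N}/\beta^{t}_{N}$ in increasing order. Since $F$ evaluated at these breakpoints can be computed in closed form, a single sweep (or a binary search) identifies the unique interval on which $F$ changes sign; on that interval $I^{+}_t$ and $I^{-}_t$ are fixed, so the implicit equation \aref{eq:pt_balancing} becomes an explicit linear equation in $p_{t}$ with strictly positive denominator $\gamma \beta^{t}_{I^{+}_t} + \beta^{t}_{I^{-}_t}$, yielding the exact value. The main subtlety that I would address carefully is that equation \aref{eq:pt_balancing} depends on $p_{t}$ through the partition $I^{+}_t(p_{t}), I^{-}_t(p_{t})$; the sorting argument above is what guarantees that one can self-consistently identify the correct partition before solving, so that \aref{eq:pt_balancing} is not merely a fixed-point characterization but an explicit formula once the active partition is known.
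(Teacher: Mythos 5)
Your proposal is correct and follows essentially the same route as the paper's own proof: both reduce the clearing condition to the zero of a piecewise-linear, monotonically increasing excess-supply function of $p_t$ and conclude existence, uniqueness, and exact computability from that structure. Your version is merely more detailed (strict slope positivity on each affine piece, limits at $\pm\infty$ in place of the paper's sign check at $p_t=0$, and the explicit breakpoint-sorting procedure for identifying the active partition), which strengthens rather than changes the argument.
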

\begin{proof}	
$  \gamma  \lfloor \beta^{t}_{i}p_{t} - \alpha^{t}_{i}\rfloor - \lfloor-\beta^{t}_{i}p_{t} + \alpha^{t}_{i} \rfloor $ is a piecewise linear and monotonically increasing function of $p_{t}$ and its value at $p_t =0 $ is negative.
Therefore $ \gamma \sum_{i\in{\cal N}} \lfloor \beta^{t}_{i}p_{t} - \alpha^{t}_{i}\rfloor - \sum_{i\in{\cal N}} \lfloor-\beta^{t}_{i}p_{t} + \alpha^{t}_{i} \rfloor$ is also a piecewise linear and monotonically increasing function of $p_{t}$ and its value at $p_t =0 $ is negative. Therefore, the solution of \aref{eq:submit_st_loss} uniquely exists and can be calculated exactly.
\end{proof}

In the LFS-DA, when the auctioneer in the market receives $(\alpha_i^t ,\beta_i^t )_{i \in {\cal N}, t \in {\cal T}}$ from all of the agents at the $k$-th iteration, the  auctioneer determines the price $p^{(k+1)}_{t}$ by solving~\aref{eq:pt_balancing}. Based on the price $p^{(k+1)}_t$, each agent either transmits $m^{(k+1)t+}_i=\mu^{t+}_i(p^{(k+1)}_t)$ or receives $m^{(k+1)t-}_i=\mu^{t-}_i(p^{(k+1)}_t)$. After determining the amount of  $(m^{(k+1)t+}_i, m^{(k+1)t-}_i )$, each agent has to reconfigure its $x^t_i$ for the given $(m^{(k+1)t+}_i, m^{(k+1)t-}_i )$ so as to satisfy the constraint $x_i \in {\cal X}_i$.
Under the condition that each agent behaves rationally, i.e., selfishly, the process of the LFS-DA mechanism is described as Algorithm \ref{alg:LFS-DA}.

\subsection{LFS-DA as an bottom-up RTP method}
In the same way as RTP, communications between an auctioneer and prosumers are required by the LFS-DA to maximize social welfare.
When a price profile $(p^{}_t )_{t \in \mathcal{T}}$ is observed, each rational agent tries to maximize its welfare by optimizing $\alpha^t_i$ on the basis of the announced price profile $(p^{}_t )_{t \in \mathcal{T}}$\footnote{Note that $\beta^t_i$ is fixed in this work.}.

When $\beta^t_i$ and $p_t$ are fixed, $\alpha^i_t$ can be determined uniquely from  $(m^{i+}_t,m^{i-}_t)$ as follows.
\begin{align}
\alpha^{t}_{i} &= \beta^{t}_{i}p^{}_t + (m^{t-}_i - m^{t+}_i).\label{eq:from_m2a}
\end{align}
Equations \aref{submit_m1}, \aref{submit_m2}, and  \aref{eq:from_m2a} show that $\alpha^{t}_{i}$ and
 $(m^{i+}_t,m^{i-}_t)$ have bijective relation under the condition that $\beta^t_i$ and $p_t$ are fixed.
Therefore, the optimal solution $\alpha^{*t}_i(p^{(k)}_t)$ can be obtained using the following procedure.
    \begin{align}
      x^{*}_{i}(p^{(k)}) &= \underset{ x_i \in {\cal X}_i}{ \argmax } \text{ }\left( \phi_i (x_i) + \sum_{t\in{\cal T}}p^{(k)}_t (\gamma m^{t+}_{i} - m^{t-}_{i}) \right) \label{eq:lfs-da-sub}\\
      \alpha^{*t}_{i}(p^{(k)}_t) &= \beta^{t}_{i}p^{(k)}_t + (m^{*t-}_i(p^{(k)}_t) - m^{*t+}_i(p^{(k)}_t)).\label{eq:from_m2a_opt}
    \end{align}
   where $x^{*t}_{i} = ( l^{*t+}_{i}, l^{*t-}_{i}, b^{*t+}_{i}, b^{*t-}_{i}, m^{*t+}_{i}, m^{*t-}_{i}, g^{*t+}_{i}, g^{*t-}_{i} )$ and $x^{*}_{i} = (x^{*t}_{i})_{t \in {\cal T}}$ .
Equation \aref{eq:lfs-da-sub} corresponds to an optimal solution of the sub-problem of the dual problem introduced in the APPENDIX~\ref{ap_sec1}.

Here, we formalize each agent's iterative learning process.
The change in the price profile and agents' behavior are determined as shown in Algorithm \ref{alg:LFS-DA}.
In the $k$-th iteration, the price profile is assumed to be $p^{(k)} := ( p^{(k)}_t )_{t \in {\cal T}}$. Each agent selfishly maximizes its welfare and obtains $x^{*}_i$, before obtaining $\alpha^{*t}_{i} = (\alpha^{*t}_{i})_{t \in {\cal T}}$.
The clearing price settled by $\text{\bf market\_clearing}$ can be obtained by solving equation~\aref{eq:pt_balancing}.

On the basis of the new price profile $p^{(k+1)}$, the amount of electricity for transacting through the regional electricity market $(m^{t+}_i, m^{t-}_i) $ is updated as follows.
\begin{align}
(m^{(k+1)t+}_i, m^{(k+1)t-}_i) \leftarrow (\mu^{t+}_i(p^{(k+1)}), \mu^{t-}_i(p^{(k+1)})).
\end{align}
After the update, the state vector becomes
\begin{align}
\tilde{x}^{*t}_{i} = ( l^{*t+}_{i}, l^{*t-}_{i}, b^{*t+}_{i}, b^{*t-}_{i}, m^{(k+1)t+}_{i},m^{(k+1)t-}_{i}, g^{*t+}_{i}, g^{*t-}_{i} ).
\end{align}
This state vector usually violates the constraints, i.e., $\tilde{x}^{*t}_{i} \notin {\cal X}_i $.
Therefore, each agent has to reconfigure $x^t_i$ to ensure satisfaction of the constraints under the condition that $(m^{t+}_i, m^{t-}_i) = (m^{(k+1)t+}_i, m^{(k+1)t-}_i) $.
The map $\text{\bf proj}_{{\cal X}_i} (\cdot| m^{t+}_i =  m^{(k+1)t+}_i, m^{t-}_i = m^{(k+1)t-}_i )$ denotes a projection  to ${{\cal X}_i}$ satisfying
$(m^{t+}_i, m^{t-}_i) = (m^{(k+1)t+}_i, m^{(k+1)t-}_i) $ for all $t \in {\cal T}$.
In the paper, the reconfiguration is performed by solving the sub-problem with the additional constraints, i.e., $(m^{t+}_i, m^{t-}_i) = (m^{(k+1)t+}_i, m^{(k+1)t-}_i) $.

\begin{algorithm}[t]
\caption{Iterative update in LFS-DA}
\label{alg:LFS-DA}
\begin{algorithmic}
\STATE $k \leftarrow  0$
\STATE Initialize the price profile $p^{(k)}  = (p_t^{(k)})_{t \in {\cal T}}$
\REPEAT
\STATE // Each agent solves its sub-problem~\aref{eq:lfs-da-sub}, and obtains its solution.
\STATE Update $x_i^{*} = ( x^{*t}_i )_{t \in {\cal T}}=  x^{*}_{i}(p^{(k)})$.
\STATE Update $\alpha^{*t}_{i} \leftarrow \beta^{t}_{i}p^{(k)}_t + (m^{*t-}_i - m^{*t+}_i)$ .
\STATE // Each agent submits $(\alpha^{*t}_{i}, \beta^t_i)$ to the market. \\
\STATE Update $p^{(k+1)} \leftarrow \text{\bf market\_clearing}\big((\alpha^{t}_{i}, \beta^t_i)_{i \in {\cal N}, t \in {\cal T}} \big)$,
\STATE Update  $(m^{(k+1)t+}_i, m^{(k+1)t-}_i) \leftarrow  (\mu^{t+}_i(p^{(k+1)}), \mu^{t-}_i(p^{(k+1)}))  $
\STATE // Reconfiguration by each agent
\STATE Update  $x_i^{(k+1)} \leftarrow \text{\bf proj}_{{\cal X}_i} (x_i^{*} | m^{t+}_i =  m^{(k+1)t+}_i, m^{t-}_i = m^{(k+1)t-}_i)$, for each $i$
\STATE $k \leftarrow  k+1$
\UNTIL a predefined stopping criterion is satisfied. 
\RETURN 
Transact $\{x_i^{(k)} \}_{i \in {\cal N}}$  with $p^{(k)}$ as price profile
\end{algorithmic}
\end{algorithm}

The following Lemma~\ref{lemma1} indicates that the changes in the price profile obtained by $\text{\bf market\_clearing}$ can be regarded an update formula as follows.

\begin{lemma}[Change of price profile in LFS-DA]\label{lemma1}
Price profile of the LFS-DA is updated as follows.
    \begin{align}
      p_t^{(k+1)} &= p_t^{(k)} - \bar{\theta}^t_k \xi_t(p^{(k)}) \label{eq:lfb_update}\\
      \xi_t(p^{(k)})
      &:=  \sum_{i\in{\cal N}} f^{t}_{i}( x^{*t}_{i}(p^{(k)}) ) \\
      &= \Big(\sum_{i\in{\cal N}} \big(\gamma m^{*t+}_{i}(p^{(k)}) - m^{*t-}_{i}(p^{(k)}) \big) \Big)_{t\in{\cal T}},
    \end{align}
where $\bar{\theta}^t_k = \big(\beta^{t}_{I_t(p^{(k+1)}_{t})}\big)^{-1} $, and $\beta^{t}_{I_t(p_{t})} = \gamma \beta^{t}_{I_t^{+}(p_{t})} + \beta^{t}_{I_t^{-}(p_{t})}$.
\end{lemma}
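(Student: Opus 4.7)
The plan is to derive the update formula as an algebraic rearrangement of the market-clearing equation~\aref{eq:pt_balancing}, using the selfish best-response relation~\aref{eq:from_m2a_opt} that each agent employs when producing its bid. First I would rewrite~\aref{eq:pt_balancing} in the cleared form $\beta^{t}_{I_t(p^{(k+1)}_t)}\, p^{(k+1)}_t = \gamma \alpha^{t}_{I_t^{+}(p^{(k+1)}_t)} + \alpha^{t}_{I_t^{-}(p^{(k+1)}_t)}$, so that the clearing price is expressed linearly in the aggregated bid parameters actually submitted by the agents. Lemma~\ref{lemma0} already supplies existence and uniqueness of $p^{(k+1)}_t$, and hence of the induced partition $I_t^{\pm}(p^{(k+1)}_t)$, so these quantities are well defined.

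Next I would substitute $\alpha^{*t}_{i} = \beta^{t}_{i} p^{(k)}_t + (m^{*t-}_i - m^{*t+}_i)$ into the right-hand side and separate the terms linear in $p^{(k)}_t$ from the remainder. The coefficient of $p^{(k)}_t$ collects to exactly $\gamma \beta^{t}_{I_t^{+}(p^{(k+1)}_t)} + \beta^{t}_{I_t^{-}(p^{(k+1)}_t)} = \beta^{t}_{I_t(p^{(k+1)}_t)}$, which matches the coefficient of $p^{(k+1)}_t$ on the left, so subtracting yields
\begin{equation*}
\beta^{t}_{I_t(p^{(k+1)}_t)}\bigl(p^{(k+1)}_t - p^{(k)}_t\bigr) = -\Bigl[\gamma \sum_{i \in I_t^{+}(p^{(k+1)}_t)} (m^{*t+}_i - m^{*t-}_i) + \sum_{i \in I_t^{-}(p^{(k+1)}_t)} (m^{*t+}_i - m^{*t-}_i)\Bigr].
\end{equation*}

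The hard part is identifying the bracketed aggregate with $\xi_t(p^{(k)}) = \sum_{i \in {\cal N}} (\gamma m^{*t+}_i(p^{(k)}) - m^{*t-}_i(p^{(k)}))$. This requires the observation that a rational selfish agent solving~\aref{eq:lfs-da-sub} at a single price will not simultaneously buy and sell, i.e.\ $m^{*t+}_i \cdot m^{*t-}_i = 0$, so the sign of $m^{*t+}_i - m^{*t-}_i$ places $i$ unambiguously in either the seller or the buyer camp. Together with the bijection between $\alpha^{*t}_i$ and $(m^{*t+}_i, m^{*t-}_i)$ noted after~\aref{eq:from_m2a}, this implies $i \in I_t^{+}(p^{(k+1)}_t)$ iff $m^{*t+}_i > 0$ and $m^{*t-}_i = 0$ (and symmetrically for $I_t^{-}$), so the bracketed expression collapses to $\gamma \sum_i m^{*t+}_i - \sum_i m^{*t-}_i = \xi_t(p^{(k)})$.

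Dividing through by $\beta^{t}_{I_t(p^{(k+1)}_t)}$ then gives $p^{(k+1)}_t - p^{(k)}_t = -\bar\theta^{t}_k\, \xi_t(p^{(k)})$ with $\bar\theta^{t}_k = \bigl(\beta^{t}_{I_t(p^{(k+1)}_t)}\bigr)^{-1}$, as claimed. The identification step in the previous paragraph is the delicate one: if an agent's best response were to straddle the price (both $m^{*t+}_i$ and $m^{*t-}_i$ strictly positive) or if a step were large enough for an agent's seller/buyer role to flip between $p^{(k)}_t$ and $p^{(k+1)}_t$, the clean cancellation above would pick up a $(1-\gamma)$-weighted error term, and the stated equality would only be approximate. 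I would therefore note explicitly that the lemma is being proved under the standing assumption of selfish rational bidding that is implicit throughout Algorithm~\ref{alg:LFS-DA}.
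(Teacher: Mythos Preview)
Your proposal follows the same route as the paper's proof: start from the clearing relation~\aref{eq:pt_balancing}, substitute $\alpha^{*t}_i = \beta^t_i p^{(k)}_t + (m^{*t-}_i - m^{*t+}_i)$, separate the $p^{(k)}_t$-term, and identify the residual with $\xi_t(p^{(k)})$. The paper carries out exactly this chain of equalities in four displayed lines, with no additional commentary.

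Where you differ is in rigor, not method. The paper's proof jumps silently from $\gamma\sum_{i\in I_t^{+}(p^{(k+1)}_t)}(m^{*t-}_i-m^{*t+}_i)+\sum_{i\in I_t^{-}(p^{(k+1)}_t)}(m^{*t-}_i-m^{*t+}_i)$ to $-\bigl(\gamma\sum_{i\in\mathcal{N}}m^{*t+}_i-\sum_{i\in\mathcal{N}}m^{*t-}_i\bigr)$, which, as you correctly observe, requires that the partition $I_t^{\pm}$ taken at $p^{(k+1)}_t$ coincide with the seller/buyer status determined by $m^{*t\pm}_i(p^{(k)})$; otherwise a $(1-\gamma)$-weighted discrepancy survives. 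The paper does not discuss this point at all, whereas you isolate it explicitly and flag the role-flip caveat. In that sense your argument is the same proof, spelled out more carefully than the original.
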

\begin{proof}
For all $t \in \mathcal{T}$,
\begin{align}
     p^{(k+1)}_{t}
      &=      \frac
      { \gamma \alpha^{*t}_{I_t^{+}(p^{(k+1)}_{t})} + \alpha^{*t}_{I_t^{-}(p^{(k+1)}_{t})} }
      { \gamma \beta^{t}_{I_t^{+}(p^{(k+1)}_{t})} + \beta^{t}_{I_t^{-}(p^{(k+1)}_{t})} }\\
      &= \frac
      { \gamma\sum_{i\in I_t^{+}(p^{(k+1)}_t)} \alpha^{*t}_{i} +\sum_{i\in I_t^{-}(p^{(k+1)}_t)} \alpha^{*t}_{i} }
      { \gamma \beta^{t}_{I_t^{+}(p^{(k+1)}_{t})} + \beta^{t}_{I_t^{-}(p^{(k+1)}_{t})} }\end{align}
\begin{align}
      &= \frac
      { \gamma\sum_{i\in I_t^{+}(p^{(k+1)}_t)}  (\beta^{t}_{i}p^{(k)}_t + (m^{*t-}_i - m^{*t+}_i))  +    \sum_{i\in I_t^{-}(p^{(k+1)}_t)} ( \beta^{t}_{i}p^{(k)}_t + (m^{*t-}_i - m^{*t+}_i))    }
      { \gamma \beta^{t}_{I_t^{+}(p^{(k+1)}_{t})} + \beta^{t}_{I_t^{-}(p^{(k+1)}_{t})} }  \\
    &= p^{(k)}_t -   \big(\beta^{t}_{I_t(p^{(k+1)}_{t})}\big)^{-1} \big(\gamma\sum_{i\in{\cal N}}m^{*t+}_{i} - \sum_{i\in{\cal N}}m^{*t-}_{i} \big)\\
    &= p_t^{(k)} - \bar{\theta}^t_k \xi(p^{(k)}) .
\end{align}
\end{proof}

\begin{theorem}
This update formula \aref{eq:lfb_update} of the LFS-DA is the same as that of the dual decomposition-based RTP algorithm \aref{eq:dual_update} except for the constant factor $\bar{\theta}^t_k$.
\end{theorem}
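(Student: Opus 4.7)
The plan is to observe that the theorem is essentially a direct comparison of two update formulas and follows immediately from Lemma~\ref{lemma1}. The RTP price update in equation~\aref{eq:dual_update} reads $p^{(k+1)}_t = p^{(k)}_t - \theta^t_k \xi_t(p^{(k)})$, where $\xi_t(p^{(k)}) = \sum_{i\in{\cal N}}(\gamma m^{*t+}_i(p^{(k)}) - m^{*t-}_i(p^{(k)}))$ is the aggregate demand--supply imbalance evaluated at the selfish optima $x^*_i(p^{(k)})$ of the sub-problems. Lemma~\ref{lemma1} already establishes that the LFS-DA clearing price evolves according to the analogous update $p^{(k+1)}_t = p^{(k)}_t - \bar{\theta}^t_k \xi_t(p^{(k)})$ with the very same gradient term $\xi_t(p^{(k)})$.

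First, I would place the two expressions side by side and verify term-by-term that the gradient direction $\xi_t(p^{(k)})$ is identical in both algorithms: in each case the sub-problem solved by agent $i$ is \aref{eq:lfs-da-sub}, whose optimal solution $x^{*}_i(p^{(k)})$ determines the same $m^{*t\pm}_{i}(p^{(k)})$, and hence the same $f^t_i(x^{*t}_i(p^{(k)}))$. Therefore the right-hand sides of the two updates agree up to a scalar multiplicative factor on the gradient, namely $\theta^t_k$ versus $\bar{\theta}^t_k$.

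Second, I would interpret $\bar{\theta}^t_k = \bigl(\beta^{t}_{I_t(p^{(k+1)}_{t})}\bigr)^{-1}$ as the endogenous step size that the market-clearing computation \aref{eq:pt_balancing} implicitly assigns through the aggregate slope of the submitted linear functions, in contrast with the exogenous learning rate $\theta^t_k$ that must be chosen by the designer in the dual decomposition-based RTP. Matching the two expressions term-wise yields the claim that the two updates differ only by a (per-time-slot) constant factor, which is precisely the statement of the theorem.

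The main obstacle has in fact already been overcome by Lemma~\ref{lemma1}, where the algebraic work of rewriting the market-clearing formula~\aref{eq:pt_balancing} as a gradient-style update was carried out; the proof of the theorem itself therefore reduces to a short inspection-based argument that simply identifies the step sizes and concludes the equivalence.
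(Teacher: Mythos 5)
Your proposal is correct and matches the paper's own argument: the paper proves the theorem simply by comparing \aref{eq:dual_update} with \aref{eq:lfb_update}, which is exactly the inspection-based comparison you describe after invoking Lemma~\ref{lemma1}. Your additional remarks identifying $\bar{\theta}^t_k$ as an endogenous, market-determined step size versus the exogenous $\theta^t_k$ are consistent with the paper's discussion following the theorem.
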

\begin{proof}
Comparing \aref{eq:dual_update} with \aref{eq:lfb_update} confirms the statement to be true.
\end{proof}

Therefore, if we consider the determination of the price in LFS-DA as a type of price update \aref{eq:dual_update} in dual decomposition-based RTP, the price change in LFS-DA becomes equal to that of RTP where $\theta^t_k = \big(\beta^{t}_{I_t(p^{(k+1)}_{t})}\big)^{-1}$.

In summary, in LFS-DA, if the auctioneer simply settles a clearing price profile to balance demand and supply, it implicitly solves the master problem of the dual problem \aref{eq:dual_master} in the same way as in the RTP algorithm.
This means that the changes in the price profile obtained by the LFS-DA mechanism become the same as those obtained by the RTP derived from the dual decomposition framework. Note that the sub-problems in RTP and in LFS-DA are completely the same and are solved in the same way. (see  \aref{dual_sub} and \aref{eq:lfs-da-sub}).
Therefore, both of the master problems and the sub-problems are the same and are solved in the same way in the RTP and the LFS-DA except for the constant factor of the sub-gradient in the master problem.
Our main theorem therefore indicates that  LFS-DA can increase social welfare in the same way as the conventional RTP without requiring any central price controller.

\section{EXPERIMENT}\label{sec6}
	This section describes our numerical example of the LFS-DA mechanism. In addition, we compare its results with those obtained for the dual decomposition-based RTP algorithm through a simulation experiment.

\subsection{Experimental conditions}
In this experiment, we set the number of prosumers to $N=20$ and the number of time slots to $T=24$.
The utility functions are set to be
\begin{align}
D_i^t(l):=
	\left\{
	\begin{array}{ll}
		\displaystyle\omega_i^t l -\frac{\theta_i^t}{2}l^2	&\displaystyle \Bigl( 0\le l\le \frac{\omega_i^t}{\theta_i^t}\Bigr) \\
		\displaystyle\frac{(\omega_i^t)^2}{2\theta_i^t}						&\displaystyle \Bigl( \frac{\omega_i^t}{\theta_i^t}< l\Bigr),
	\end{array}
	\right. \label{demandprofile}
\end{align}
where $\omega_i^t$ and $\theta_i^t>0$ are given constants~\cite{Samadi2010,fahrioglu2001using}.
In addition, we set $\theta_i^t = 30$ and $\omega_i^t=10$ for all $t\in \mathcal{T}, i\in\mathcal{N}$.

We considered all generators to be PV systems; therefore, no variable costs were taken into account.
The generation costs $C^t_i$ are defined by $
C^t_i(l_i^{t-}):=0\ \ (\forall l_i^{t-}\in [0,	l_i^{t-,{\rm max}}])$.
We used the PV generation profiles that were measured at 20 houses in Higashi Ohmi city during Autumn 2010 as the maximum value of the per-slot energy production profile $l_i^{t-,{\rm max}}$~\cite{kishi2012}.
Fig.~\ref{fig:PV} shows $l_i^{t-,{\rm max}}$ for each prosumer $i$.
We set $s_i^{{\rm init}}=0, s_i^{{\rm max}}=5, b_i^{+,{\rm max}} = 1, b_i^{-,{\rm max}} = 1$, and $\eta_i=0.7$ for all $i\in\mathcal{N}$.
We set the efficiency of transmitting electricity within i-Rene to $\gamma=0. 8$, and the parameters of the sub-gradient method to $\theta_k = 0.1$.
We set the maximum values to $m_i^{+,{\rm max}}=5$, and $m_i^{-,{\rm max}}=5$ for all  $i\in\mathcal{N}, t\in\mathcal{T}$.
We set the prices of the outside grid to $p_t^{G-} = 20$ and $p_t^{G+} = 0$ for all $t\in\mathcal{T}$.

In this experiment, two baseline methods were compared with the LFS-DA.
The first, {\it RTP (dual composition)}, employed Algorithm~\ref{alg:dual}. However, the demand and supply were usually not balanced before the algorithm converged to
the optimal solution, in which case the utility conducting the RTP would compensate
the difference by buying the deficit electricity from or selling the surplus electricity to the outside grid, respectively. The cost for the compensation was assumed to be redistributed equally among the agents.
{\it RTP without compensation} shows the social welfare before the difference is compensated. Although dual decomposition-based RTP could increase prosumers' welfare, it was reduced because of compensation by the utility.
The second method is {\it without trading}, that is, electricity trading does not occur in the regional market. Instead, agents attempt to
increase their welfare by using their batteries and elasticity of demand. The {\it optimal} is numerically obtained by solving the primal problem
using a solver.

\begin{figure}[t]
\centering
\includegraphics[width=0.5\textwidth,keepaspectratio=true]{./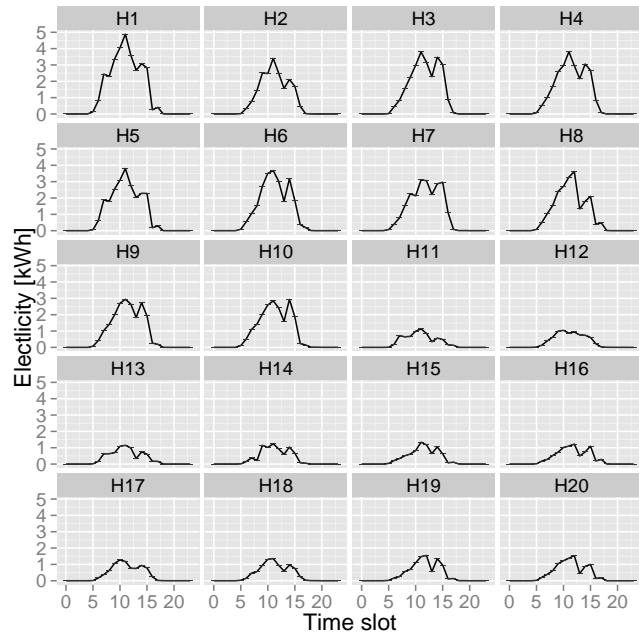}\label{fig:lt}
\caption{Amount of power generated $l_i^{t-,{\rm max}}$ by 20 prosumers' PV cells}
\label{fig:PV}
\end{figure}

\subsection{Result}

The value of the social welfare is shown in Figure~\ref{fig:social_welfare} as a function of time. This clearly shows that the LFS-DA outperforms the RTP algorithm based on dual decomposition. Especially, in the early stages of the iteration, the LFS-DA was found to outperform the RTP algorithm significantly. Although the LFS-DA mechanism was able to obtain an exact balance between demand and supply, the RTP algorithm was unable to modify prosumers' bids and asks reactively.
The cost associated with the lack of compensation had a negative impact on the overall social welfare, although each agent would still be able to increase its own welfare.

The changes in price profiles are shown in Figure~\ref{fig:price}. These figures show that the RTP algorithm and the LFS-DA obtained the same price profile except for numerical errors as our theory suggested.
Figure~\ref{fig:welfare_ratio} shows the rate at which each agent's welfare increased compared with the {\it without trading} condition. As shown, LFS-DA increased the welfare of all the agents in contrast with the RTP algorithm, which caused
several agents to reduce their welfare because of the compensation for the imbalance in demand and supply. Figure~\ref{fig:load} shows the electricity consumption of three representative agents. For the {\it without trading} condition, the electricity price for agents is fixed at $p^{G-}_t$. In contrast, RTP and LFS-DA resulted in fluctuating price profiles and leveled consumption profiles. This figure shows that LFS-DA functions in the same way as the RTP by controlling electricity consumption using price signals.

\begin{figure}[t]
\centering
\includegraphics[width=0.8\textwidth,keepaspectratio=true]{./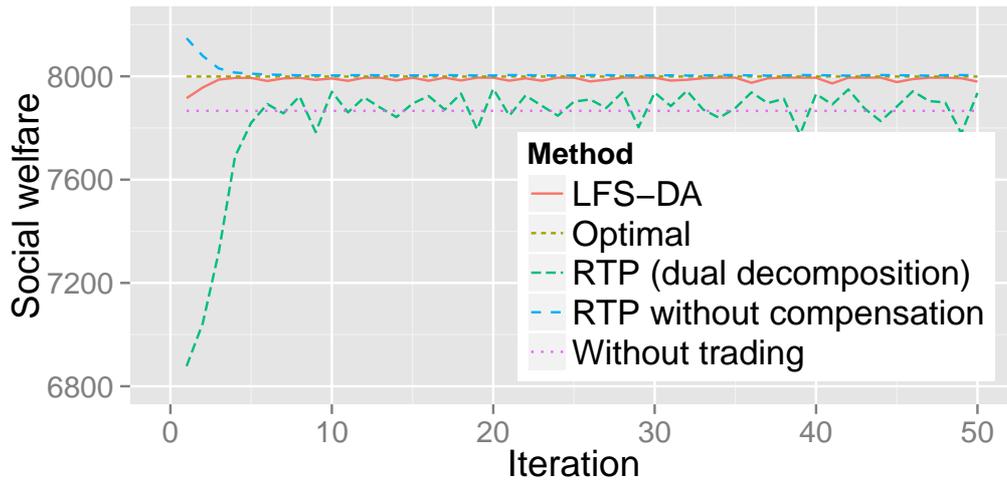}\label{fig:alpha_01}
\caption{Social welfare after iterations with $(\sum_{i \in \mathcal{N}} \beta^t_i )^{-1}= \theta _k = 0.1$}
\label{fig:social_welfare}
\end{figure}
\begin{figure}[t]
\centering
\includegraphics[width=0.5\textwidth,keepaspectratio=true]{./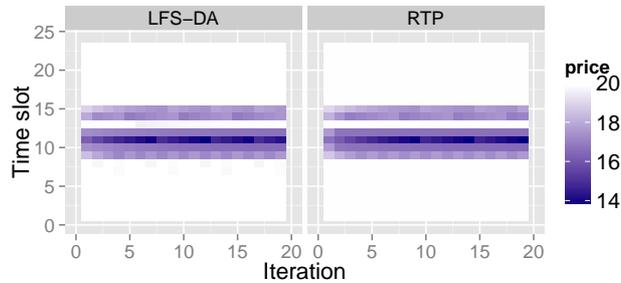}\label{fig:price}
\caption{Price profiles determined by iteration with LFS-DA and with RTP}
\label{fig:price}
\end{figure}

\begin{figure}[t]
\centering
\includegraphics[width=0.8\textwidth,keepaspectratio=true]{./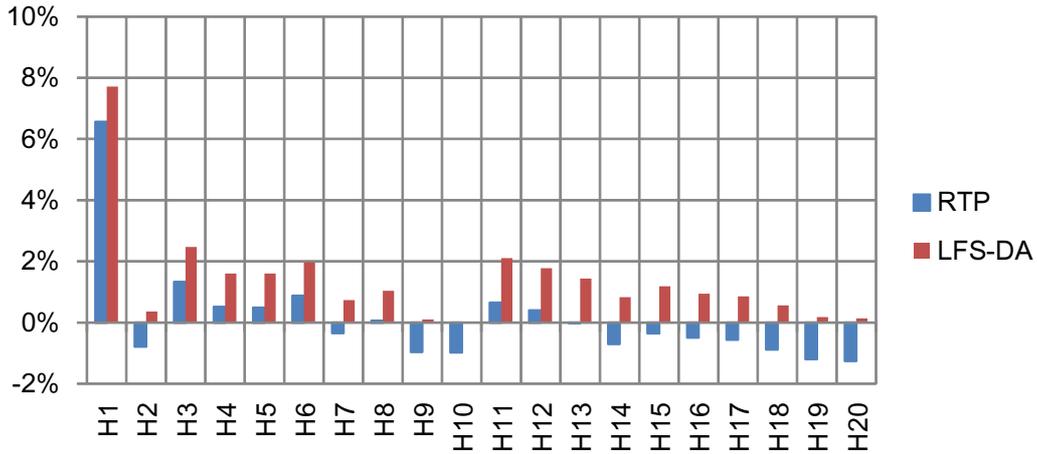}
\caption{Rate of increase of each agent's welfare compared with the {\it without trading} condition }\label{fig:welfare_ratio}
\end{figure}

\begin{figure}[t]
\centering
\includegraphics[width=0.8\textwidth,keepaspectratio=true]{./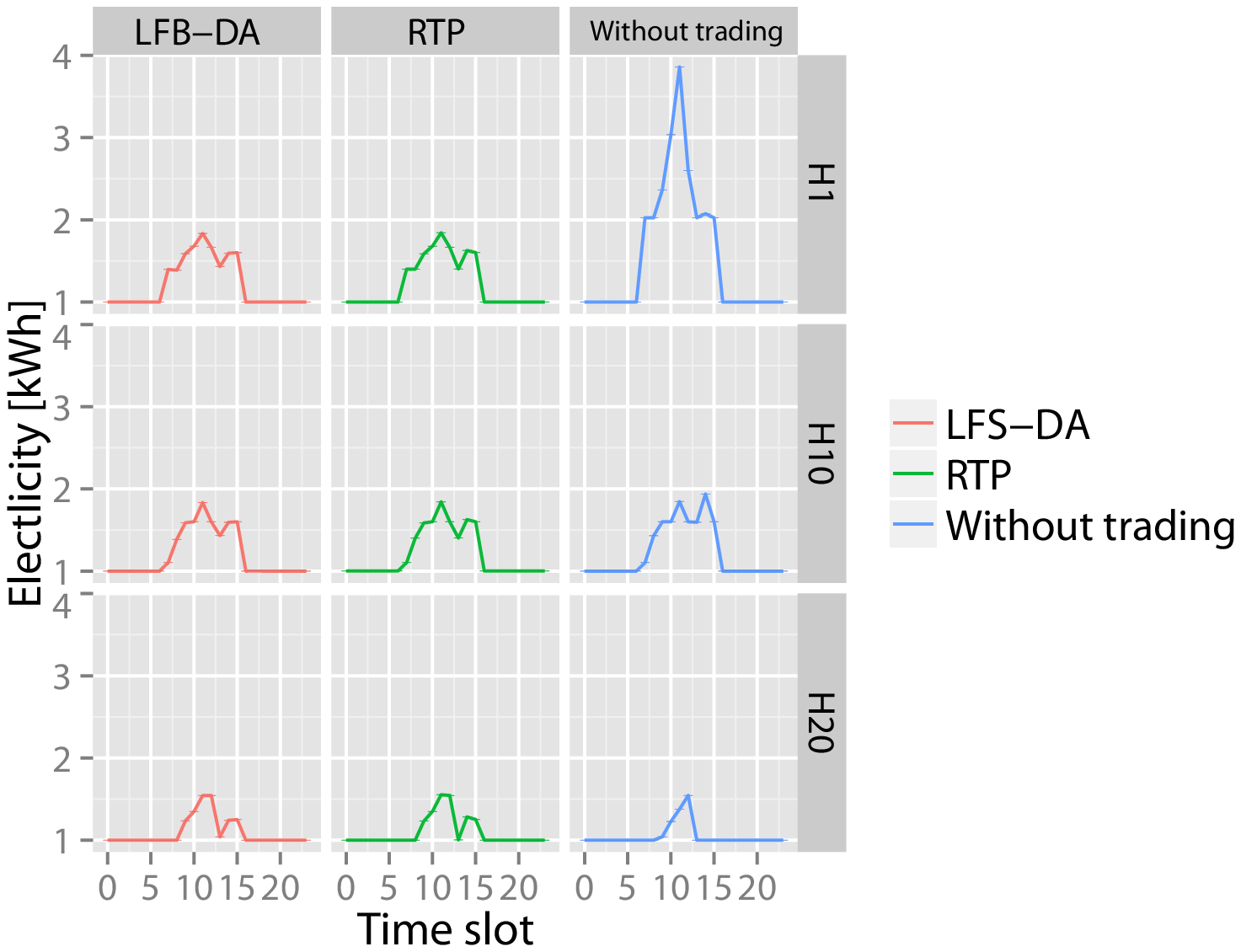}
\caption{Electricity consumption profile by three houses}\label{fig:load}
\end{figure}

\section{CONCLUSION}\label{sec7}
This paper presents a linear function submission-based double-auction (LFS-DA) algorithm for a regional prosumers' electricity network named i-Rene. 
This mechanism can achieve an exact balance between electricity demand and supply for each time slot throughout the learning phase.
It was proved that LFS-DA can solve the primal problem of maximizing the social welfare of the regional electricity network in the same way as the RTP algorithm, but without any central price setter, i.e., a utility or a large electricity company. This means that the price controller for RTP formed in a bottom-up manner as a result of the use of the double-auction mechanism can maximize the social welfare of the regional prosumers' electricity network.
This paper presents a clarification of the relationship between the RTP algorithm derived on the basis of a dual decomposition framework.

The relationship between the dual decomposition-based RTP algorithm and the simple LFS-DA algorithm  for i-Rene was clarified.
Specifically, we proved that the changes in the price profile with the LFS-DA mechanism are equal to those achieved by RTP controlled by the algorithm derived from the dual decomposition framework except for a constant factor. A simulation experiment was used to demonstrate the proposed mechanism numerically.

Our main result, i.e., Theorem 1, is very simple but powerful. We used it to show that the simple LFS-DA is sufficient for achieving day-ahead optimization and planning for i-Rene. We also showed that there is no need for a central utility to perform RTP to balance demand and supply; instead, social welfare was maximized by setting an appropriate price profile. Thus, the mechanism functions in a fully decentralized manner.

In this paper, we fixed  $\{ \beta^t_i \}_{i \in {\cal N}, t \in {\cal T}}$ which are the coefficients of the first order of the linear demand and supply functions. The parameters originally represented the price elasticity of prosumers' demand and supply.
Interestingly, our main result showed these parameters to perform the function of the learning rate in RTP. This means that the price elasticity of prosumers' demand and supply affects the convergence of the iterative calculation of the price profile in i-Rene.

Our main theorem showed the theoretical connection between research pertaining to a multi-agent-based microgrid using linear function-bidding, e.g., PowerMatcher~\cite{kok2012}, and a control theoretical approach to RTP.  The connection is expected to lead to renewed collaboration and fruitful integration between the two research areas. This formulation also provides us with the possibility to extend the LFS-DA mechanism beyond its current capabilities.
For example, Yo et al. introduced chance constraints to dual decomposition-based RTP~\cite{Yo2013} and a similar extension should be possible for the LFS-DA approach. In future, we aim to develop a double-auction mechanism, which could automatically control a regional electricity network of prosumers even in a noisy environment.

\appendix
\section*{APPENDIX}
\section{Dual Decomposition and Real-Time Pricing}\label{ap_sec1}
In this section, we derive the dual decomposition-based RTP algorithm on the basis of Problem \ref{prob:PP}, i.e., the primal problem. The complete description of the $i$-th agent's feasible set ${\cal X}_i$, except for the network constraint \aref{primal3}, becomes the following.
  \begin{align}
    {\cal X}_i &:=
    \{x_i \in \mathbb{R}^{8T}| \\
    &~~~~~~~h^{tj}(x_i) \le 0 ~~~\forall t \in {\cal T}, j\in\{1,...,16\} , \label{primal5}\\
    &~~~~~~~h^{t17}(x_i) = 0 ~~~\forall t\in{\cal T} \},\label{primal6}
       \end{align}
  \begin{align}
    h^{t1}(x_i) &:= l^{t+,\min}_{i} - l^{t+}_{i},&
    h^{t2}(x_i) &:= - l^{t-}_{i}, &
    h^{t3}(x_i) &:= - b^{t+}_{i}, \nonumber \\
    h^{t4}(x_i) &:= - b^{t-}_{i}, &
    h^{t5}(x_i) &:= - m^{t+}_{i}, &
    h^{t6}(x_i) &:= - m^{t-}_{i}, \nonumber \\
    h^{t7}(x_i) &:= - g^{t+}_{i}, &
    h^{t8}(x_i) &:= - g^{t-}_{i}, &\nonumber 
           \end{align}
  \begin{align}
    h^{t9}(x_i) &:= l^{t-}_{i} - l^{t-,\max}_{i}, &
    h^{t10}(x_i) &:= b^{t+}_{i} - b^{+,\max}_{i}, &
    h^{t11}(x_i) &:= b^{t-}_{i} - b^{-,\max}_{i},\nonumber 
                     \end{align}\begin{align}
    h^{t12}(x_i) &:= m^{t+}_{i} - m^{+,\max}_{i}, &
    h^{t13}(x_i) &:= m^{t-}_{i} - m^{-,\max}_{i}, &
    h^{t14}(x_i) &:= g^{t-}_{i} - g^{-,\max}_{i}, \nonumber 
       \end{align}
  \begin{align}
    h^{t15}(x_i) &:= -s^{\text{init}}_{i} - \sum^{t}_{k=1} (\eta_i b^{k+}_{i} - b^{k-}_{i}), &
    h^{t16}(x_i) &:= s^{\text{init}}_{i} + \sum^{t}_{k=1} (\eta_i b^{k+}_{i} - b^{k-}_{i}) - s^{\max}_{i}, \nonumber \\
    h^{t17}(x_i) &:= l^{t+}_{i} - l^{t-}_{i} + b^{t+}_{i} - b^{t-}_{i} + m^{t+}_{i} - m^{t-}_{i} + g^{t+}_{i} - g^{t-}_{i}, &
    f^{t}_{i}(x_i) &:= \gamma m^{t+}_{i} - m^{t-}_{i} .\nonumber 
  \end{align}

In the above formulas, $h^{t1} , \ldots , h^{t14}$ are constraints for the domains of the state vector $x^t_i$.
The two constraints for the battery profiles,  $h^{t15}$ and $h^{t16}$, represent the storage capacity constraints. The constraint $h^{t17}$ represents the balance of the electricity flow measured by the $i$-th agent's smart meter.

Network optimization problems with a decomposable structure, such as the primal problem, can be solved by using the dual decomposition technique~\cite{Palomar2006}.
This technique decomposes a primal problem into independent sub-problems representing agents' selfish optimization behavior, and a master problem that optimizes prices to balance the demand and supply. 
Each of the decomposed sub-problems can be solved independently. 
The solution of the primal problem and that of the dual problem are known to be the same if Slater's theorem is satisfied~\cite{bertsekas1999nonlinear}.
In addition to that, the Lagrange multipliers, which appear through dual decomposition, can be interpreted as a ``price'' in our economic system.
Within the context of the studies of smart grids and power systems, the dual decomposition technique provides the mathematical basis of RTP.
The dual problem corresponding to the primal problem becomes
\begin{problem}[Dual problem]
  \begin{align}
    \underset{\lambda\in\mathbb{R}^T}{\minimize } &\text{ } g(\lambda), \label{dual11}\\
    g(\lambda) &:= \sum_{i\in{\cal N}} \underset{x_i\in{\cal X}_i}{\sup}L_{i\lambda}(x_i), \label{dual12}\\
    L_{i\lambda}(x_i) &:= \phi_i(x_i) + \sum_{t\in{\cal T}} \lambda_t f^{t}_{i}(x_i).\label{dual13}
  \end{align}
\end{problem}
Generally, it is known that the Lagrange multiplier $\lambda := ( \lambda_t )_{ t \in {\cal T}}$ represents the ``price'' of goods, which is constrained by an equation relaxed by $\lambda_t$~\cite{Palomar2006}.
A comparison between ~\aref{dual13} and~\aref{eq:welf1} provides us with a clear understanding of this relationship.
In this problem, $\lambda_t$ represents a buyer's price of electricity traded in the local electricity market at time $t$.
The simultaneous solution of the sub-problems and the master problem results in the solution of the dual problem and this is achieved in practice by adopting an iterative optimization technique.

If the Lagrange multipliers $\lambda$ are given, the objective function~\aref{dual12} of the dual problem  can be divided into $N$ objective functions. Each objective function for each agent can be solved independently. 
\begin{problem}[Sub-problems]
  \begin{align}
    \underset{x_i\in\mathbb{R}^{8T}}{\maximize} &\text{ } L_{i\lambda}(x_i), \label{dual_sub}\\
    \subjectto
    &\text{ } x_i \in {\cal X}_i.\nonumber
  \end{align}
\end{problem}
This problem is referred to as a sub-problem of the dual problem~\aref{dual11}.
From the viewpoint of game theory or micro-economics, if each agent can be regarded as rational, i.e., selfish, each sub-problem
is expected to be solved autonomously by each agent.
If the price of electricity in the market is given as $\lambda$, one of the optimal solutions $x^*_i(\lambda)$ to the problem~\aref{dual_sub} can be obtained as follows.

  \begin{align}
    x^*_i (\lambda) = \underset{x_i \in {\cal X}_i}{\argmax} \text{ } L_{i\lambda}(x_i) \label{dual_sub*}
  \end{align}
   where $x^{*t}_{i} = ( l^{*t+}_{i}, l^{*t-}_{i}, b^{*t+}_{i}, b^{*t-}_{i}, m^{*t+}_{i}, m^{*t-}_{i}, g^{*t+}_{i}, g^{*t-}_{i} )$ and $x^{*}_{i} = (x^{*t}_{i})_{t \in {\cal T}}$ .

The dual problem is solved by optimizing the price profile $\lambda\in\mathbb{R}^T$.
For RTP formulation purposes, we assume there is a central utility determining the price profile in the regional electricity market.
On the basis of each agent's optimal strategy $(x^*_i(\lambda))_{i\in{\cal N}}$ 
, the central utility has to solve the following problem.
\begin{problem}[Master problem]
   \begin{align}
      \underset{\lambda\in\mathbb{R}^T}{\minimize} &\text{ } g(\lambda),\label{eq:dual_master}\\
      g(\lambda) &= \sum_{i\in{\cal N}} L_{i\lambda}(x^*_i(\lambda)).
    \end{align}
\end{problem}
This problem is referred to as the {\it master problem} and its solution corresponds to searching for an adequate price that would balance demand and supply.


One of the most common procedures for solving the dual problem is as shown in Algorithm~\ref{alg:dual}.
The Lagrange multiplier $\lambda_t^{(k)}$ is updated by using a known sub-gradient method~\cite{Palomar2006,Samadi2010}
If we adopt this sub-gradient method, $\lambda_t^{(k)}$ is updated as follows in the dual problem:
    \begin{align}
      \lambda_t^{(k+1)} &= \lambda_t^{(k)} - \theta_k \xi(\lambda^{(k)}) \label{eq:dual_update_lambda}\\
      \xi(\lambda^{(k)})
      &:= \Big( \sum_{i\in{\cal N}} f^{t}_{i}( x^{*}_{i}(\lambda^{(k)}) ) \Big)_{t\in{\cal T}}\\
      &= \Big(\sum_{i\in{\cal N}} \big(\gamma m^{*t+}_{i}(\lambda^{(k)}) - m^{*t-}_{i}(\lambda^{(k)}) \big) \Big)_{t\in{\cal T}},
    \end{align}
where $\theta_k > 0$ is the learning rate of the sub-gradient method.
By updating $x_i$ and $\lambda$ iteratively, the numerical solution of the dual problem can be obtained. The solution of the dual problem is also a solution of the primal problem. Therefore, the social welfare of the network is expected to be maximized.




\authorcontributions{Author Contributions}
Tadahiro Taniguchi proved the main theorem and wrote the paper; Koki Kawasaki performed the experiments; Yoshiro Fukui derived basic formulation; Tomohiro Takata and Shiro Yano contributed analysis. 

\conflictofinterests{Conflicts of Interest}
The authors declare no conflict of interest.

\bibliographystyle{mdpi}
\makeatletter
\renewcommand\@biblabel[1]{#1. }
\makeatother
\bibliography{smgr}
%
%
%



%


%

\end{document}